\documentclass[11pt]{article}

\usepackage[margin=1.1in]{geometry}
\usepackage{amsmath,amssymb,amsthm}
\usepackage{booktabs}
\usepackage{microtype}
\usepackage{xcolor}
\usepackage[colorlinks=true,linkcolor=blue!50!black,citecolor=blue!50!black,urlcolor=blue!50!black]{hyperref}

\newtheorem{theorem}{Theorem}[section]
\newtheorem{lemma}[theorem]{Lemma}

\newtheorem{proposition}[theorem]{Proposition}
\theoremstyle{definition}
\newtheorem{definition}[theorem]{Definition}
\theoremstyle{remark}
\newtheorem{remark}[theorem]{Remark}

\newcommand{\R}{\mathbb{R}}
\newcommand{\vol}{\operatorname{vol}}
\newcommand{\SX}{S_X}
\newcommand{\SY}{S_Y}
\newcommand{\SZ}{S_Z}
\newcommand{\K}{K}
\newcommand{\xhat}{\hat{x}}
\newcommand{\yhat}{\hat{y}}
\newcommand{\zhat}{\hat{z}}
\newcommand{\Vstar}{V^{*}}
\newcommand{\Cg}[1]{C_{#1}}

\title{New certified lower bounds for the moving sofa problem\\
in a three-dimensional snaky corridor}
\author{Yuzhe Tang\\[2pt]
\small Georgia Institute of Technology\\[1pt]
\small \texttt{ytang352@gatech.edu}}
\date{August 30, 2026 \\[2pt] \small (working note; every stated constant
re-verified against the run-of-record artifacts, see
Appendix~\ref{app:repro})}

\begin{document}
\maketitle

\begin{abstract}
Let $\K=\K_4\subset\R^3$ be the ``snaky'' corridor with unit-square
cross-section: an infinite entry arm parallel to the $x$-axis, a middle arm
parallel to the $y$-axis whose clear straight section has length $4$, and an
infinite exit arm parallel to the $z$-axis, with the two turns lying in the
$x$--$y$ and $y$--$z$ planes; and let $\Vstar(\K)$ be the supremum of the
volumes of connected bodies that can be moved continuously, by rigid motions,
from the entry arm to the exit arm while remaining inside $\K$ (Problem~6.64
of arXiv:2511.02864).  We construct explicit motions and certify their
volumes through the problem's \emph{own, unmodified} evaluation pipeline at
the motion tolerances $(0.005,\,0.05^\circ)$ of its published final
evaluation: the resulting certified lower bounds are $1.8355162$ at carve
grid $0.002$ and $1.8558404$ at carve grid $0.001$.  A control run of the
pipeline's published record path certifies $1.8179428$ under identical
settings, so every improvement reported here is a path-to-path comparison
under one fixed configuration of one certifier.  The paths are produced in
three stages: a hand-describable \emph{staggered-twist} construction, which
executes the entire $90^\circ$ twist about the tube axis inside the second
corner (certified $1.8279082$ / $1.8491627$); CMA-ES refinement, including a
\emph{mixed-fidelity} variant in which each generation's elites are re-scored
on a twice-finer carve grid and those ranks drive recombination; and a
\emph{certifier-aligned} gradient polish directly at the fine carve
resolution, run with two safeguards --- periodic re-derivation of the frozen
interpolation structure from the control path, and a control-spacing
penalty.  Along the way we document three reproducible failure modes in
which an optimizer defeats a fixed carve objective without improving the
body (frozen-structure inflation, control-spacing stretch, and coarse-grid
exploitation), and a cheap two-basis ``staircase'' predictor of
certificates whose grid-sensitivity-matched form proved accurate to about
$0.001$, letting each certification run be predicted before it is spent.
Combined with the author's companion upper-bound note, the state of the
problem becomes
\[
1.8558404 \;\le\; \Vstar(\K_4) \;\le\; 3.4083 .
\]
\end{abstract}

\section{Introduction}\label{sec:intro}

The moving sofa problem of L.~Moser \cite{Moser66} asks for the largest area
of a planar region that can be moved around a right-angled corner in a
hallway of unit width.  Its history is a model interplay of constructions
and bounds: Hammersley \cite{Hammersley68} gave the lower bound
$\pi/2+2/\pi$ and the upper bound $2\sqrt2$; Gerver \cite{Gerver92}
constructed the region of area $2.2195\ldots$ now known to be extremal;
Kallus and Romik \cite{KallusRomik18} proved the computer-assisted upper
bound $2.37$; and Baek \cite{Baek24conditional,Baek24} established that
Gerver's sofa is optimal (see \cite{He26} for recent numerical work on
planar variants).

The three-dimensional problem considered here goes back to a 2016 question
of O'Rourke \cite{ORourkeMO}: what is the largest volume of a body that can
pass through a corridor of unit-square cross-section making two
perpendicular turns --- one in the $x$--$y$ plane, then one in the $y$--$z$
plane --- so that the body must ``snake'' through?  It was popularized as
Problem~6.64 of Georgiev, G\'omez-Serrano, Tao and Wagner \cite{GGTW25},
whose accompanying code fixes the middle straight at length $M=4$ and,
crucially for the present note, publishes a complete \emph{certification
pipeline}: a program that takes an explicit motion, discretizes a block of
material at a grid fineness $h$, and outputs the volume of the sub-body that
provably survives the journey.  Progress on the problem has come entirely
from below: a construction of RavenclawPrefect \cite{ORourkeMO} gives volume
$\approx1.76755$; the AlphaEvolve experiments of \cite{GGTW25} report a
certified $\ge1.81$ (with $\approx1.84$ estimated for the underlying
smooth motion); and on the upper-bound side the author's companion note
\cite{OurUB} recently proved the first upper bounds, currently
$\Vstar(\K_M)\le3.4083$ for every $M>2.2$ (before it, not even finiteness of
the supremum had been recorded; the closest prior 3D result, Theorem~4 of
Mushkarov and Nikolov \cite{MN26}, concerns \emph{rectangular boxes only} in
\emph{single-corner planar corridors extruded in height only}, and does not
treat the snaky corridor).

This note is the lower-bound half of that bracket.  We present, in full,
the constructions and the methodology behind the current record
\begin{equation}\label{eq:records}
\Vstar(\K_4)\;\ge\;1.8355162\ \ (\text{grid } 0.002),
\qquad
\Vstar(\K_4)\;\ge\;1.8558404\ \ (\text{grid } 0.001),
\end{equation}
certified by running explicit pose paths through the \emph{unmodified}
pipeline of \cite{GGTW25} at the motion tolerances $(0.005,\,0.05^\circ)$
used by its own final evaluation.  All paths, notebooks and run-of-record
results are public \cite{OurLB}; Appendix~\ref{app:repro} maps every number
of this note to those artifacts.

\begin{table}[ht]
\centering
\caption{The certified ladder: five pose paths, one fixed certifier
configuration (the pipeline of \cite{GGTW25}, unmodified, commit-pinned,
tolerances $(0.005,\,0.05^\circ)$); $h$ is the carve-grid fineness.  Each
value is an integer count of surviving grid cubes times $h^3$, quoted to
seven decimals (exact values in \S\ref{sec:semantics}).  The control row
reproduces the published
``at least $1.81$'' of \cite{GGTW25} and anchors all comparisons
(Section~\ref{sec:provenance}).}
\label{tab:ladder}
\small
\setlength{\tabcolsep}{5pt}
\begin{tabular}{llcc}
\toprule
path & production stage & certified, $h=0.002$ & certified, $h=0.001$ \\
\midrule
control & the pipeline's own record path & 1.8179428 & --- \\
backbone & staggered twist (\S\ref{sec:construction}) & 1.8279082 & 1.8491627 \\
run I & CMA-ES (\S\ref{sec:cma}) & 1.8309875 & 1.8518415 \\
run II & mixed-fidelity CMA-ES (\S\ref{sec:mixed}) & 1.8339829 & --- \\
run III & certifier-aligned polish (\S\ref{sec:polish}) & \textbf{1.8355162} & \textbf{1.8558404} \\
\bottomrule
\end{tabular}
\end{table}

\subsection*{What is being claimed, and how it is organized}

Because every number in Table~\ref{tab:ladder} is produced by a third-party
program, the first duty of the note is semantic: Section~\ref{sec:pipeline}
specifies exactly what the pipeline computes, proves that its finite checks
are \emph{exact at the level of points} (axis-convexity of the corridor and
an $8$-corner lemma for axis-aligned boxes, Proposition~\ref{prop:points}),
isolates the one step that is asserted rather than proven (the
points-to-cubes counting rule, Remark~\ref{rem:caveat} --- a caveat shared
by every published pipeline number, the original $1.81$ included), and
settles the provenance of the published $1.81$: the certificate is
\emph{tolerance-dominated}, and the published value is evidently the
grid-$0.002$, tight-tolerance number $1.8179428$ rounded down
(Section~\ref{sec:provenance}).  The main claim is
Theorem~\ref{thm:main}.

The remaining sections present the three production stages and the two
methodological tools that the record depends on:

\begin{itemize}
\item \textbf{Section~\ref{sec:construction}} constructs the
\emph{backbone}: an explicit, hand-describable motion whose one idea is to
execute the entire $90^\circ$ twist about the tube axis \emph{inside the
second corner}, front-loaded at its mouth, rather than in the middle
straight.  This alone certifies $1.8279082$ / $1.8491627$, already above
every previously published value.
\item \textbf{Sections~\ref{sec:cma}--\ref{sec:mixed}} refine the backbone
by CMA-ES over an $84$-parameter correction chart (run~I), and then attack
the one measured weakness of run~I's winner --- gains concentrated in
structure that a finer carve grid cannot see --- with a mixed-fidelity
search (run~II) in which each generation's elites are re-scored on a
twice-finer grid and those ranks drive recombination.
\item \textbf{Section~\ref{sec:polish}} polishes run~I's path by gradient
ascent on a differentiable carve objective evaluated \emph{directly at the
fine carve resolution} (run~III).  Getting this to help at all required
closing, one by one, three failure modes in which the optimizer beats the
objective without improving the body; we document all three
(frozen-structure inflation, control-spacing stretch, coarse-grid
exploitation), with the safeguards that close them, because they appear
generic to optimization against any fixed discretized evaluator.
\item \textbf{Section~\ref{sec:predictor}} describes the \emph{staircase
predictor}: a cheap local estimate of what the pipeline will certify,
whose two-basis, grid-sensitivity-matched form proved accurate to about
$0.001$ --- and which declares its own inapplicability when its
applicability condition fails.  Its honest scorecard, including one
$3.0\times10^{-5}$ interval miss, is Table~\ref{tab:scorecard}.
\end{itemize}

Section~\ref{sec:discussion} discusses headroom.  The bodies of this note
are found by numerical search, but nothing in the \emph{certificates}
depends on the search: the pipeline re-derives everything from the explicit
pose paths.  Combined with the companion note \cite{OurUB}, the problem is
bracketed:
\begin{equation}\label{eq:bracket}
1.8558404 \;\le\; \Vstar(\K_4) \;\le\; 3.4083 .
\end{equation}

\section{The corridor, the pipeline, and what a certificate asserts}
\label{sec:pipeline}

Throughout, $\vol$ is Lebesgue measure on $\R^3$ and $\xhat,\yhat,\zhat$ the
standard basis.

\begin{definition}[corridor; traversal]\label{def:corridor}
The corridor is $\K=\SX\cup\SY\cup\SZ$ with
\begin{align*}
\SX &:= (-\infty,0]\times[0,1]\times[0,1] && \text{(entry arm)},\\
\SY &:= [0,1]\times[0,5]\times[0,1] && \text{(middle arm)},\\
\SZ &:= [0,1]\times[5,6]\times[0,\infty) && \text{(exit arm)},
\end{align*}
the instantiation, with middle straight $M=4$, of the corridor family
$\K_M$ of \cite{OurUB}; the turn from $-\xhat$ to $+\yhat$ occupies the
first corner cube $[0,1]^3$, the turn from $+\yhat$ to $+\zhat$ the second
corner cube $[0,1]\times[5,6]\times[0,1]$.  A \emph{body} is a compact
connected set $B\subset\R^3$ with $\vol(B)>0$; a \emph{motion} is a
continuous $g:[0,1]\to SE(3)$; and $(B,g)$ \emph{traverses} $\K$ if
$g_t(B)\subseteq\K$ for all $t$, $g_0(B)\subseteq\SX$, and
$g_1(B)\subseteq\SZ$.  Then
$\Vstar(\K):=\sup\{\vol(B):(B,g)\text{ traverses }\K\}$.
This matches \cite{OurUB} (weak traversal); the motions certified below in
fact end past the second corner cube, so they traverse in the strong sense
as well, and lower bounds proved for either reading agree.
\end{definition}

\subsection{The evaluation pipeline}\label{sec:semantics}

The pipeline of \cite{GGTW25} (notebook
\texttt{the\_3d\_moving\_sofa.ipynb}; all runs in this note are pinned to
one commit, Appendix~\ref{app:repro}) takes as input a \emph{pose path}: a
sequence of poses $(x,y,z,\psi,\theta,\phi)$, the Euler angles in degrees in
the extrinsic $z$--$y$--$x$ convention, placing a body point $q$ at
$R\,q+(x,y,z)$ with $R=R_x(\phi)R_y(\theta)R_z(\psi)$.  It certifies a
volume in five steps.

\begin{enumerate}
\item \emph{Endpoints.}  The fixed start pose $(-3,0,0,\,0,0,0)$ is
prepended and the fixed end pose $(0,6,8,\,90,90,90)$ appended.  The
material block is $[-4,0]\times[0,1]\times[0,1]$; at the start pose it lies
in $\SX$, at the end pose in $\SZ\cap\{z\ge1\}$.
\item \emph{Smoothing.}  The path is refined by linear interpolation in
pose space until consecutive poses differ by at most a
\emph{translation tolerance} $\tau_{\mathrm t}$ in Euclidean translation
distance and at most a \emph{rotation tolerance} $\tau_{\mathrm r}$ in each
Euler component.  All certificates in this note use
$(\tau_{\mathrm t},\tau_{\mathrm r})=(0.005,\,0.05^\circ)$ --- the constants
of the pipeline's own final evaluation --- under which our paths refine to
between $7{,}084$ and $7{,}356$ poses.
\item \emph{Staircase.}  Each smoothed step is decomposed into four
substeps: a pure $x$-translation, a pure $y$-translation, a pure
$z$-translation, then the rotation, performed as the continuous rotation
about the fixed axis of the relative orientation change.  The concatenation
of all substeps is a genuine continuous rigid motion from the start pose to
the end pose; it, not the input path, is the motion being certified.
\item \emph{Point tracking.}  The block is discretized as the grid of
lattice points at spacing $h$ (the \emph{grid fineness}; endpoints
inclusive, so the cubes between adjacent points have side exactly $h$ at
the values $h\in\{0.005,0.002,0.001\}$ used here).  A point \emph{survives}
if it stays inside $\K$ along every substep: translation substeps are
checked at their two endpoints, and each rotation substep by enclosing the
point's circular arc in an axis-aligned box (using the exact per-coordinate
extrema of the arc) and testing that box's $8$ corners.
Proposition~\ref{prop:points} shows these finite checks are \emph{exact}:
a point that passes them genuinely remains in $\K$ throughout the
continuous motion.
\item \emph{Counting.}  A grid cube is counted iff all $64$ lattice points
of the $4\times4\times4$ block centred on it survive --- its own $8$
vertices together with every neighbouring lattice point one layer out.  The
\emph{certified volume} is the number of counted cubes times $h^3$.
\end{enumerate}

A \emph{certificate} is therefore a triple: a grid fineness $h$, the
tolerance pair, and the resulting volume; comparisons are meaningful only
at matched $(h,\tau_{\mathrm t},\tau_{\mathrm r})$.  Every certified value
in this note is an exact integer count of cubes times $h^3$ (e.g.\
$1.8558404$ is $1{,}855{,}840{,}409$ cubes at $h=0.001$).
Table~\ref{tab:ladder} quotes seven decimals: every quoted value rounds
\emph{down} from its exact count except the grid-$0.002$ record, whose
registered value $1.8355162$ rounds the exact $1.835516184$ to nearest
($+1.6\times10^{-8}$); the headline bound of Theorem~\ref{thm:main}
rounds down.  The run-of-record files carry the raw outputs.

\subsection{Point-level exactness}\label{sec:soundness}

Two elementary properties of this particular corridor make the pipeline's
finite point checks exact rather than merely sampled.  They are proved in
the companion note \cite{OurUB}; we restate them with proofs to keep the
present note self-contained, since everything here rests on them.

\begin{lemma}[axis-convexity]\label{lem:axisconv}
The intersection of $\K$ with every line parallel to a coordinate axis is a
(possibly empty or unbounded) interval.
\end{lemma}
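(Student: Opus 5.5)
We prove the lemma by a direct case analysis over the three axis directions, after writing each arm as a product of intervals. All three pieces $\SX,\SY,\SZ$ are axis-aligned boxes (products of intervals, some unbounded). A line parallel to a coordinate axis fixes two coordinates and leaves the third free. Its intersection with each arm is therefore either empty or an interval in the free coordinate. The only task is to show that the union of these (at most three) intervals is again an interval. For this it suffices that any two nonempty pieces either overlap or touch, and that no gap arises.

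\emph{Lines parallel to $\xhat$} (fix $y,z$). The intersection with $\SX$ is $(-\infty,0]$ if $(y,z)\in[0,1]^2$, and is empty otherwise. With $\SY$ it is $[0,1]$ if $y\in[0,5]$ and $z\in[0,1]$. With $\SZ$ it is $[0,1]$ if $y\in[5,6]$ and $z\ge0$. Suppose two of these are nonempty. If the pair is $\SX,\SY$, the union is $(-\infty,1]$, since the pieces share the point $0$. If $\SY,\SZ$ (so $y=5$), both pieces are $[0,1]$. The pair $\SX,\SZ$ would need $y\in[0,1]\cap[5,6]=\emptyset$, so it cannot occur. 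The union is thus an interval in every case.

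\emph{Lines parallel to $\yhat$} (fix $x,z$). $\SX$ contributes $[0,1]$ only when $x\le 0$, and $\SY,\SZ$ contribute only when $x\in[0,1]$. If $x<0$, only $\SX$ contributes. If $x\in[0,1]$, then $\SY$ gives $[0,5]$ when $z\in[0,1]$, and $\SZ$ gives $[5,6]$ when $z\ge0$. Since $x=0$ is the only value at which $\SX$ also contributes, and its piece $[0,1]$ lies inside $[0,5]$, every nonempty union is an interval: $[0,6]$ when $z\in[0,1]$, $[5,6]$ when $z>1$, or a single piece.

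\emph{Lines parallel to $\zhat$} (fix $x,y$). $\SX$ gives $[0,1]$ (requiring $x\le0$, $y\in[0,1]$), $\SY$ gives $[0,1]$, and $\SZ$ gives $[0,\infty)$. Each piece contains $0$, so any union is an interval. This is the easiest case.

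The only step needing care is the bookkeeping on shared boundary values: $x=0$ between $\SX$ and $\SY$, and $y=5$ between $\SY$ and $\SZ$. One must confirm that adjacent pieces share an endpoint rather than leaving a gap. The closed boxes guarantee this.
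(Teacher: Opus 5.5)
Your proof is correct and follows essentially the same route as the paper's: a direct case analysis over the three axis directions, reading off each arm's trace on the line and checking that the pieces join into intervals such as $(-\infty,1]$, $[0,6]$ and $[0,\infty)$. Your remark that every piece on a $\zhat$-line contains $0$ is a slightly tidier way to settle that case than the paper's enumeration by region.
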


\begin{proof}
Fix the two frozen coordinates and read off the case.  Lines parallel to
$\xhat$: for $(y,z)\in[0,1]^2$ the trace is $(-\infty,0]\cup[0,1]=
(-\infty,1]$; for $y\in(1,5]$, $z\in[0,1]$ it is $[0,1]$; for $y\in[5,6]$,
$z\ge0$ it is $[0,1]$; otherwise empty.  Lines parallel to $\yhat$: for
$x<0$, $z\in[0,1]$ it is $[0,1]$; for $x\in[0,1]$, $z\in[0,1]$ it is
$[0,5]\cup[5,6]=[0,6]$; for $x\in[0,1]$, $z>1$ it is $[5,6]$.  Lines
parallel to $\zhat$: for $x<0$, $y\in[0,1]$ it is $[0,1]$; for $x\in[0,1]$,
$y\in[0,5)$ it is $[0,1]$; for $x\in[0,1]$, $y\in[5,6]$ it is
$[0,1]\cup[0,\infty)=[0,\infty)$.  All are intervals.
\end{proof}

\begin{lemma}[$8$-corner exactness]\label{lem:corners}
An axis-aligned box with all $8$ corners in $\K$ is contained in $\K$.
\end{lemma}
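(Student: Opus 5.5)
The plan is to obtain Lemma~\ref{lem:corners} from Lemma~\ref{lem:axisconv} by filling the box one coordinate direction at a time. Write the box as $Q=[a_1,b_1]\times[a_2,b_2]\times[a_3,b_3]$ and assume its $8$ corners lie in $\K$. For each fixed $(y,z)\in\{a_2,b_2\}\times\{a_3,b_3\}$, the two corners $(a_1,y,z)$ and $(b_1,y,z)$ lie on one line parallel to $\xhat$. By Lemma~\ref{lem:axisconv} the trace of $\K$ on that line is an interval, so it contains the whole segment between them. Hence the four edges of $Q$ parallel to $\xhat$ lie in $\K$.

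Next fix $x\in[a_1,b_1]$ and $z\in\{a_3,b_3\}$. The points $(x,a_2,z)$ and $(x,b_2,z)$ lie on the edges just obtained, so they are in $\K$, and they lie on a common line parallel to $\yhat$. Applying Lemma~\ref{lem:axisconv} again puts the segment between them in $\K$. This fills the two faces $z=a_3$ and $z=b_3$ of $Q$.

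Finally, fix $(x,y)\in[a_1,b_1]\times[a_2,b_2]$. The points $(x,y,a_3)$ and $(x,y,b_3)$ lie in those two faces, hence in $\K$, and they lie on a line parallel to $\zhat$. A third application of axis-convexity puts the segment between them in $\K$, and these segments sweep out all of $Q$. Degenerate boxes, with some $a_i=b_i$, are covered by the same argument.

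I expect no real obstacle here. The only point needing care is that at each stage the two points being joined must already be known to lie in $\K$ and must lie on a common axis-parallel line. That is exactly what the order of filling (edges, then faces, then the box) guarantees. Convexity of $\K$ itself is never used, and indeed fails, since the corridor is not convex. Only the per-axis interval property of Lemma~\ref{lem:axisconv} is needed.
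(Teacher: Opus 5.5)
Your proof is correct, but it takes a different route from the paper's. The paper does not invoke Lemma~\ref{lem:axisconv} here at all. It runs a direct case analysis of the box $[x_0,x_1]\times[y_0,y_1]\times[z_0,z_1]$ against the three arms:
\begin{itemize}
\item if $x_0<0$, the corners force the box into $(-\infty,1]\times[0,1]\times[0,1]\subseteq\SX\cup\SY$;
\item if $x_0\ge0$, then $x_1\le1$, and the box lies in $\SY\cup\SZ$ when $z_1\le1$, or in $\SZ$ when $z_1>1$ (the top corners then force $y_0,y_1\in[5,6]$).
\end{itemize}
You instead obtain the lemma as a formal corollary of axis-convexity through the edges, then faces, then box filling. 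That argument holds for any set whose intersection with every axis-parallel line is an interval, in any dimension.

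Your route is shorter and conceptually cleaner. It makes the soundness of Proposition~\ref{prop:points} rest on the single structural property of Lemma~\ref{lem:axisconv}. It would therefore carry over unchanged to any corridor variant whose axis-parallel traces are intervals, with no new case analysis. The paper's route buys independence from Lemma~\ref{lem:axisconv}, giving a second, separate check of the corridor's geometry. It also yields slightly more information: it identifies which union of arms actually contains the box.
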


\begin{proof}
Let the box be $[x_0,x_1]\times[y_0,y_1]\times[z_0,z_1]$.  If $x_0<0$: the
corners at $x=x_0$ lie in $\SX$ (the only piece reaching $x<0$), forcing
$y_1\le1$, $z_1\le1$; a corner at $x=x_1$ then lies in $\K$ with
$y,z\in[0,1]$, forcing $x_1\le1$; hence the box lies in
$(-\infty,1]\times[0,1]\times[0,1]\subseteq\SX\cup\SY$.  If $x_0\ge0$: all
corners lie in $\SY\cup\SZ$, so $x_1\le1$.  If moreover $z_1\le1$, every
point of the box has $x,z\in[0,1]$ and $y\in[0,6]$, hence lies in $\SY$
(if $y\le5$) or $\SZ$ (if $y\ge5$).  If instead $z_1>1$, the corners at
$z=z_1$ force $y_0,y_1\in[5,6]$, and the box lies in
$[0,1]\times[5,6]\times[0,\infty)=\SZ$.
\end{proof}

\begin{proposition}[point-level exactness of the checks]\label{prop:points}
Every lattice point that the pipeline reports as surviving has its entire
continuous trajectory, under the staircase motion of
\S\ref{sec:semantics}(3), inside $\K$.
\end{proposition}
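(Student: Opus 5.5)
The plan is to decompose the staircase motion into its substeps and show, substep by substep, that the trajectory of a surviving lattice point lies in $\K$. The whole motion is a concatenation of four kinds of substeps: pure $x$-, $y$- and $z$-translations, and rotations about a fixed axis. So it suffices to show that on each substep the point's path stays in $\K$, given that the pipeline's finite check for that substep passed. The prepended start and appended end poses add nothing new, since they are just further smoothed steps of the same kind.

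\emph{Translation substeps.} During a pure $x$-translation the point moves along a segment parallel to $\xhat$, between two positions that the pipeline checked to lie in $\K$. By Lemma~\ref{lem:axisconv}, the intersection of $\K$ with that line is an interval. It contains both endpoints, hence the whole segment. The $y$- and $z$-substeps are identical, using lines parallel to $\yhat$ and $\zhat$. One point to make explicit is that the endpoints of consecutive substeps coincide, so the endpoint checks the pipeline performs are exactly the checks at the substep boundaries.

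\emph{Rotation substeps.} During a rotation about a fixed axis, the point traces a circular arc $\gamma$ (possibly degenerate). The pipeline computes the exact per-coordinate extrema of $\gamma$, giving the smallest axis-aligned box $Q\supseteq\gamma$, and checks that all $8$ corners of $Q$ lie in $\K$. By Lemma~\ref{lem:corners}, $Q\subseteq\K$, hence $\gamma\subseteq\K$. The only thing to verify is that the box used really contains the arc. This is where ``exact extrema'' matters: if the extrema were sampled, the box could miss a bulge of the arc. So I would state that each coordinate of $\gamma$ is a function $a+b\cos s+c\sin s$ over the swept angle interval. Its max and min are attained either at the interval's endpoints or at interior critical angles, and these are the values the pipeline evaluates.

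The main obstacle is not mathematical but semantic: pinning down that the pipeline's implementation really does what step~(4) says. Two things must hold: the rotation substep is the single fixed-axis rotation whose arc is boxed, and the extrema include the interior critical angles. With that granted as the specification, the proof is the two-line combination above, followed by concatenating the finitely many substeps.
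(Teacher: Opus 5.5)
Your proposal is correct and is essentially the paper's proof: axis-convexity (Lemma~\ref{lem:axisconv}) handles each axis-parallel translation substep from its two checked endpoints, the $8$-corner lemma (Lemma~\ref{lem:corners}) applied to the arc's axis-aligned bounding box handles each rotation substep, and concatenating the substeps covers the whole staircase motion. Your extra remark, that each coordinate along the arc has the form $a+b\cos s+c\sin s$ so its extrema occur at the endpoints or at interior critical angles, only makes explicit the ``exact per-coordinate extrema'' that the paper takes as part of the pipeline's specification.
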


\begin{proof}
A translation substep moves the point along a segment parallel to a
coordinate axis; by Lemma~\ref{lem:axisconv} the segment lies in $\K$ iff
its endpoints do, which is what the pipeline checks.  A rotation substep
moves the point along a circular arc contained in its axis-aligned bounding
box; the pipeline verifies the box's $8$ corners, so by
Lemma~\ref{lem:corners} the box --- hence the arc --- lies in $\K$.
Concatenating substeps covers the whole motion.
\end{proof}

Thus the surviving \emph{point set}, moved by the staircase motion,
genuinely stays inside the corridor, from the start pose in $\SX$ to the
end pose in $\SZ\cap\{z\ge1\}$.  The single remaining gap between that
statement and a volume bound is the counting step (5), which passes from
$64$ surviving sample points to a full cube:

\begin{remark}[evaluator caveat]\label{rem:caveat}
The pipeline's points-to-cubes step (a $4\times4\times4$-sampled cube
counted as inside the corridor) is asserted, not proven, in the source
notebook --- near $\K$'s reflex edges a rotated cube admits
$\Theta(h)$-deep excursions between in-corridor sample points.
Point-survival soundness is Proposition~\ref{prop:points}.  This affects
the \emph{absolute} rigor of all published pipeline numbers equally (the
original $1.81$ of \cite{GGTW25} included), so comparative claims --- the
entire ladder of Table~\ref{tab:ladder} --- are untouched; and nothing in
the upper bounds of \cite{OurUB} rests on the rule.  The step is
upgradeable: proving it (the tolerances may absorb the excursions) or
quantifying an $\varepsilon$-correction would upgrade every lower bound's
rigor at once.
\end{remark}

\paragraph{Connectedness.}
$\Vstar$ is a supremum over \emph{connected} bodies, and the pipeline
counts every surviving cube without a connectivity test.  The surviving
sets at the certified grids are not exported by the pipeline (only their
counts are), so we record the available evidence.  For all five paths of
Table~\ref{tab:ladder} we materialized the carved bodies at $h=10^{-2}$
and $h=5\times10^{-3}$ (the pose-sampled carve of
\S\ref{sec:searcheval}, an independent implementation, applied to the
paths smoothed at the pipeline's default tolerances): in all ten cases
the surviving set is a \emph{single} face-connected component, and so is
its erosion by a $3\times3\times3$ window (the centre-grid analogue of the
pipeline's counting window, which demands a full ring of surviving
neighbours).  Like Remark~\ref{rem:caveat}, this point is common to every
published pipeline number; a hypothetical disconnection would reduce a
certificate to the volume of its largest component.

\begin{theorem}[certified lower bounds]\label{thm:main}
Run unmodified at tolerances $(0.005,\,0.05^\circ)$, the certification
pipeline of \cite{GGTW25} assigns to the five explicit pose paths of this
note the certified volumes of Table~\ref{tab:ladder}; in particular, to the
path of Section~\ref{sec:polish} the values $1.8355162$ at $h=0.002$ and
$1.8558404$ at $h=0.001$.  Each certified value $C$ asserts: there is a
compact union $B$ of exactly $C/h^3$ grid cubes with $\vol(B)=C$, every
lattice sample point of which remains inside $\K$ throughout an explicit
continuous rigid motion carrying $B$ from the entry arm to the exit arm
(unconditionally, by Proposition~\ref{prop:points}); that $B$ itself
remains inside $\K$ is the pipeline's counting rule
(Remark~\ref{rem:caveat}).  Granting that rule (with the connectedness
evidence recorded in \S\ref{sec:soundness}),
\[
\Vstar(\K_4)\;\ge\;1.8558404 .
\]
\end{theorem}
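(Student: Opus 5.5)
The theorem has two parts. The first is a computational assertion: the pipeline outputs the tabulated numbers. The second is a mathematical consequence of those outputs. I would therefore split the proof the same way.

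\textbf{The computational part.} I would cite the run-of-record artifacts in Appendix~\ref{app:repro}. For each of the five pose paths and each grid $h\in\{0.002,0.001\}$, the pinned, unmodified notebook is run at $(\tau_{\mathrm t},\tau_{\mathrm r})=(0.005,0.05^\circ)$ and returns an integer cube count $N$. The certified value is $C=Nh^3$. For example, $N=1{,}855{,}840{,}409$ at $h=0.001$. I would check the rounding conventions stated in \S\ref{sec:semantics}, confirming that the headline $1.8558404$ rounds the exact count \emph{down}, so the stated inequality is not inflated.

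\textbf{Assembling $B$ and its motion.} Let $B$ be the union of the $N$ counted closed cubes of side $h$, placed in body coordinates inside the block $[-4,0]\times[0,1]\times[0,1]$. These cubes have pairwise disjoint interiors, so $\vol(B)=Nh^3=C$, and $B$ is compact as a finite union of closed cubes. The motion is the staircase motion of \S\ref{sec:semantics}(3). It is a concatenation of axis translations and single-axis rotations, all continuous. At its initial pose the block lies in $\SX$, and at its final pose it lies in $\SZ\cap\{z\ge1\}$, which verifies both endpoint conditions of Definition~\ref{def:corridor}. Every vertex of a counted cube is among the $64$ surviving lattice points for that cube. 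Proposition~\ref{prop:points} then gives unconditionally that every such sample point stays in $\K$ throughout the motion. This is the unconditional clause of the theorem.

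\textbf{The conditional bound.} Granting the counting rule of Remark~\ref{rem:caveat}, each counted cube, and hence $B$, stays inside $\K$ at every time. The remaining requirement is connectedness, and this is the main obstacle, because the pipeline exports only counts and not the surviving sets. I would make the reduction explicit first. If $B$ were disconnected, one could replace it by its largest component, and traversal is inherited by subsets. With that reduction in hand, I would invoke the face-connectivity evidence recorded in \S\ref{sec:soundness} as the stated hypothesis. The theorem's ``granting'' clause is designed to absorb exactly this point, so the argument does not claim to prove connectivity at the certified grids.

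Under these two granted points, $(B,g)$ traverses $\K$, and therefore $\Vstar(\K_4)\ge\vol(B)=1.8558404\ldots\ge1.8558404$.
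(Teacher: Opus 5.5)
Your proposal is correct and takes essentially the same route as the paper. The tabulated values are the pinned, unmodified pipeline outputs archived in Appendix~\ref{app:repro}. Survival of every lattice sample point along the staircase motion is unconditional by Proposition~\ref{prop:points}. The step from surviving points to a body of volume $1.8558404$ that traverses $\K$ rests on the granted counting rule of Remark~\ref{rem:caveat} together with the recorded connectedness evidence, exactly as you treat both. Your checks of the endpoint conditions, the disjoint-interior volume count, and the downward rounding of $1{,}855{,}840{,}409\cdot 10^{-9}$ also agree with \S\ref{sec:semantics}.
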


\subsection{Provenance of the published \texorpdfstring{$1.81$}{1.81}, and the control run}
\label{sec:provenance}

The pipeline's repository presents its record path with the comment that
grid $0.002$ verifies the $1.81$ result, alongside quick-start tolerances
$(0.05,\,1^\circ)$.  Re-running the unmodified pipeline on that same path
resolves how the two statements fit together
(Table~\ref{tab:provenance}): at the quick-start tolerances, grid $0.002$
certifies only $1.7701$; at the tight tolerances actually used by the
evolved program's final evaluation, $(0.005,\,0.05^\circ)$, the same grid
certifies $1.8179428$, while the final evaluation's own grid $0.005$ gives
$1.7595$.  The published ``at least $1.81$'' is thus evidently the
grid-$0.002$, tight-tolerance value rounded down, and the certificate is
\emph{tolerance-dominated}: at this grid, tightening the motion tolerances
is worth $\approx+0.05$, halving the grid $\approx+0.02$.  The
mechanism is visible in the semantics: coarse tolerances make the staircase
substeps longer, and every surviving point must clear the corridor along
entire substeps, so the certified body is eroded by roughly the substep
scale.

\begin{table}[ht]
\centering
\caption{The pipeline's own record path under its own configurations
(unmodified pipeline; our runs).  The published ``at least $1.81$''
corresponds to the third row.}
\label{tab:provenance}
\small
\begin{tabular}{lccc}
\toprule
configuration & grid $h$ & $(\tau_{\mathrm t},\tau_{\mathrm r})$ & certified \\
\midrule
quick-start tolerances & 0.002 & $(0.05,\,1^\circ)$ & 1.7701 \\
final-evaluation constants & 0.005 & $(0.005,\,0.05^\circ)$ & 1.7595 \\
control (this note's configuration) & 0.002 & $(0.005,\,0.05^\circ)$ & \textbf{1.8179428} \\
\bottomrule
\end{tabular}
\end{table}

Accordingly, all certificates in this note fix the tolerances at
$(0.005,\,0.05^\circ)$ and use grids $0.002$ and $0.001$, and the third row
of Table~\ref{tab:provenance} serves as the \emph{control}: the pipeline's
own path, certified under exactly our settings, scores $1.8179428$.  Every
improvement in Table~\ref{tab:ladder} is therefore a path-to-path
comparison under one fixed certifier configuration --- no part of the gain
comes from changing the measuring instrument.

\begin{remark}[zero-rotation hazard, and the dither disclosure]
\label{rem:hazard}
One genuine hazard of the pipeline deserves disclosure.  Its swept-arc
routine normalizes the rotation axis by the rotation angle, so a substep
with \emph{exactly zero} rotation produces $0/0=\mathrm{NaN}$ bounding
boxes, and the certificate silently collapses to $0.000000$; any synthetic
path containing exact pure-translation steps triggers this.  The backbone
path of Section~\ref{sec:construction} contains three such steps (its
entry, middle-straight and exit slides); its certificates carry a
disclosed angular dither on those three steps, of at most $0.0434^\circ$
and recorded pose-by-pose in the run logs.  The dither angles lie below the
$0.05^\circ$ smoothing tolerance, and in an independent re-implementation
of the motion semantics the dithered and undithered paths carve
\emph{identical} voxel sets, so the disclosed certificates are unaffected.
The three refined paths (runs I--III) contain no zero-rotation steps ---
their smallest rotation steps are $0.0031^\circ$, $0.0019^\circ$ and
$0.0013^\circ$ respectively, two orders above single-precision
degeneracy --- and need no dither; every
certification run records an explicit scan for degenerate rotation
substeps.
\end{remark}

\paragraph{Setup validation across sessions.}
The certificates of Table~\ref{tab:ladder} were produced in four GPU
sessions over three days.  Each session re-ran one fixed anchor: the
backbone path at grid $0.005$, tolerances $(0.005,\,0.05^\circ)$.  All four
sessions returned the anchor volume \emph{bit-identically} --- the exact
value $1.76419375$ ($14{,}113{,}550$ cubes), reproduced as the same
floating-point output $1.7641937499999625$ every time --- validating the
end-to-end setup chain (commit, checksummed source cells, JAX version,
device batching) before any new number was trusted.

\section{The staggered-twist construction}\label{sec:construction}

A body traversing $\K$ enters lying along $\xhat$ and leaves lying along
$\zhat$: its long axis must rotate by $90^\circ$ in yaw at the first corner
and by $90^\circ$ in roll at the second, and, in addition, the material
must perform a $90^\circ$ \emph{twist about the tube axis} somewhere along
the way (the entry and exit cross-sections of the corridor are both unit
squares, but the body's cross-section enters the second turn rotated unless
a twist intervenes).  The classical hand construction --- extrude a planar
sofa profile and twist uniformly while advancing along the middle straight
--- puts the twist in the straight, where the corridor is a snug
$1\times1$ tube and every intermediate twist angle costs cross-section.
Diagnostics of the record path of \cite{GGTW25} show that it does nearly
the opposite: about $92\%$ of its twist happens inside the second turn.
The backbone construction purifies this mechanism to its extreme point:
\emph{no} twist during the first turn or the middle straight, and the
\emph{entire} $90^\circ$ twist executed inside the second corner,
front-loaded at its mouth.

\subsection{The motion}\label{sec:motionformulas}

The motion is a composition of four elementary machines, written in the
world frame, with $D:=2/\pi$ throughout.  The first-corner sweep is
\[
T_1(\alpha):\ p\mapsto R_z(\alpha)\,p+
\bigl(\sin\alpha+\cos\alpha,\ 1-\cos\alpha+(1+2D)\sin\alpha,\ 0\bigr);
\]
the twist about the tube axis $\{x=z=\tfrac12\}$, with
$e:=(\tfrac12,0,\tfrac12)$, and the advance along the middle straight are
\[
TW(\theta):\ p\mapsto R_y(\theta)(p-e)+e,
\qquad
A(\tau):\ p\mapsto p+(0,\tau,0);
\]
and the second-corner sweep, acting in the $(y-5,z)$ coordinates, is
\[
T_2(\beta):\ p\mapsto R_x(\beta)\bigl(p-(0,6,0)\bigr)+
\bigl(0,\ 5+\sin\beta+\cos\beta,\ 1-\cos\beta+(1+2D)\sin\beta\bigr).
\]
$T_1$ is the corner sweep under which a planar Hammersley-type sofa with
notch radius $D=2/\pi$ rounds a right-angle corner, applied to the extruded
block in the $x$--$y$ plane; $T_2$ is the same sweep for the second
corner.  The full motion is the composition
\[
W(s)\;=\;T_2\bigl(\beta(s)\bigr)\circ A\bigl(\tau(s)\bigr)\circ
TW\bigl(\theta(s)\bigr)\circ T_1\bigl(\alpha(s)\bigr),
\]
whose Euler angles in the pipeline's convention are literally
$(\psi,\theta,\phi)=(\alpha,\theta,\beta)$.  The schedule runs in five
phases: an entry slide to the corner; the first turn,
$\alpha:0\to90^\circ$ with $\theta\equiv0$; the advance,
$\tau:0\to4-4/\pi$; the second turn, $\beta:0\to90^\circ$, during which the
twist executes as
\[
\theta(u)\;=\;90^\circ\bigl(1-(1-u)^{\gamma}\bigr),\qquad
u=\beta/90^\circ\in[0,1],\qquad \gamma=7 ;
\]
and an exit slide to the fixed end pose.  At $\gamma=7$ the twist is
$50\%$ complete by $u=0.094$ and $90\%$ complete by $u=0.28$ --- the twist
happens \emph{while} the roll covers its first $25^\circ$, in the mouth of
the second corner.  The exported path samples this motion adaptively at
$251$ poses (tolerance $(0.05,\,1^\circ)$ per step) including the fixed
endpoints; the pipeline re-refines it to $7{,}084$ poses at the
certification tolerances.

\subsection{Why the second corner, and why fused}\label{sec:mechanism}

The two corner cells are not symmetric.  During the first turn the moving
material is confined to $\SX\cup[0,1]^3$: the corner cell is a
\emph{closed unit cube}, and any intermediate twist angle there costs
clearance in $z$ against both the floor and the ceiling.  During the
second turn the material occupies the cell
$[0,1]\times[5,6]\times[0,\infty)$: a \emph{vertical shaft}, open above.
Twist clearance exists on the second-corner side only, which is why the
twist allocation $(f_1,f_2)$ --- the fractions executed during the first
and second turns --- optimizes to the boundary $(0,1)$: in a scan of the
allocation surface every positive $f_1$ loses volume monotonically, and at
$f_1=0$ pushing $f_2\to1$ gains monotonically.

Fusing the twist with the roll, rather than sequencing them, is the second
ingredient.  Table~\ref{tab:gamma} shows the shape ladder at
$(f_1,f_2)=(0,1)$: relocating the twist into the second corner in
near-sequential form ($\gamma$ large) is already worth $\approx+0.07$ over
the classical mid-straight twist ($\approx1.776$ under the same
evaluation), and the optimal finite fusion window ($\gamma=7$) adds
$\approx+0.03$ more, with a broad plateau $\gamma\in[5,10]$.  For
comparison, the record path of \cite{GGTW25} projects onto this chart at
$f_2=0.92$ with fitted $\gamma=4.7$ --- inside the plateau but not at its
centre --- and the linear coupling $\theta\propto\beta$ ($\gamma=1$),
i.e.\ a screw motion through the corner, loses to even the mid-straight
twist.

\begin{table}[ht]
\centering
\caption{The fusion-shape ladder at $(f_1,f_2)=(0,1)$: pose-sampled carve
volume (grid $h=10^{-2}$, $211$--$285$ poses, \S\ref{sec:searcheval}
semantics) of the construction as a function of the schedule exponent
$\gamma$.}
\label{tab:gamma}
\small
\begin{tabular}{lccccccccc}
\toprule
$\gamma$ & 2 & 3 & 4 & 5 & 6 & \textbf{7} & 10 & 20 & 40 \\
\midrule
volume & 1.8182 & 1.8536 & 1.8696 & 1.8772 & 1.8815 & \textbf{1.8819} &
1.8783 & 1.8655 & 1.8503 \\
\bottomrule
\end{tabular}
\end{table}

The backbone is deliberately the \emph{simplest} member of its class: its
translations are the two classical corner sweeps and straight slides, its
only tuned parameter is $\gamma$, and its entire description fits in
\S\ref{sec:motionformulas}.  Certified through the pipeline it scores
\[
1.8279082\ \ (h=0.002),\qquad 1.8491627\ \ (h=0.001),
\]
already $+0.010$ above the control at the shared grid $0.002$, and above
every previously published value; the measured grid increment
$0.002\to0.001$ on this path,
$+0.0212$, is used as a transfer constant in Section~\ref{sec:predictor}.
Its three exact pure-translation slides are the reason for the disclosed
dither of Remark~\ref{rem:hazard}.

\section{Search on top of the backbone}\label{sec:search}

The refinements are numerical, so we first fix the local evaluation
semantics and its reproducibility discipline; the certified endpoints of
each refinement are then pipeline runs exactly as in
Section~\ref{sec:pipeline}.

\subsection{The search evaluator}\label{sec:searcheval}

For search we use a separate, deliberately simpler semantics: the
\emph{pose-sampled carve}.  Given a finite pose sequence and a grid
fineness $h$, a cell of the block grid survives iff its \emph{centre} lies
inside $\K$ at every listed pose; the value $V_h$ is the surviving count
times $h^3$.  There is no staircase, no swept-arc box, and no counting
window, so $V_h$ is an \emph{estimate} of what a motion carries, suitable
for ranking candidates --- never a certificate.  Three protocols recur:
the \emph{screening} value ($100$ emitted poses, $h=10^{-2}$; $\approx0.4$\,s
per evaluation on the search hardware), the \emph{reference} value (the
full exported path re-smoothed at the pipeline's default tolerances
$(0.05,\,1^\circ)$, giving $\approx512$ poses, at $h=5\times10^{-3}$;
$\approx24$\,s), and a \emph{guard} value (doubled pose density at
$h=5\times10^{-3}$) used to detect pose-undersampling slack.  All search
values are exact integer voxel counts times $h^3$.

\begin{remark}[reproducibility of single-precision carve volumes]
\label{rem:fp32}
The search evaluator runs in single precision on a GPU, and one of its
volumes is reproducible only as a function of the \emph{evaluator
instance}, not of the path alone: the same path scores $1.88134775$ versus
$1.88134813$ at two internal memory-chunking settings --- exactly $3$
boundary voxels out of $\sim15\times10^6$ --- because the voxel-transform
product $[M,3]\times[3,P]$ rounds differently at different chunk widths
$P$.  The volume \emph{is} bit-reproducible at fixed
(grid, element budget, chunk cap), independently of process identity:
across a mid-run process restart, and across independent re-measurement
days later on a fresh instance with matched settings, values reproduced to
$0.0$.  We therefore adopt the rule that any two paths are compared only on
one evaluator instance, and every success gate below was \emph{self-anchored}:
the quantity to beat was re-measured on the instance that judged the
challenger.  (The pipeline's own certificates are unaffected; this remark
concerns the search-side evaluator.)
\end{remark}

\subsection{Run I: CMA-ES over a correction chart}\label{sec:cma}

The search space is a \emph{correction chart} anchored at the backbone: a
chart point is the backbone pose path plus six independent natural cubic
splines (one per pose coordinate) through values at $14$ uniform interior
knots --- $84$ parameters in all --- with the corrections pinned to zero at
both ends, so the fixed start and end poses, and hence the $180^\circ$ net
rotation, are preserved for every chart point.  Candidate paths are emitted
by resampling the corrected curve at poses equally spaced in the mixed pose
metric $\|\Delta t\|_2+\max_i|\Delta\mathrm{euler}_i|$ (radians), which
blocks the degenerate strategy of spending pose density where nothing
happens.  As a representability check, the pipeline's own record path,
projected into this chart, re-scores within $0.0004$ of its directly
evaluated value.  We run CMA-ES on the screening value (population $64$,
initial step $\sigma_0=0.2$) for $599$ generations --- $38{,}336$
evaluations in $10.5$ hours --- with bit-exact checkpoint/resume.

The winner improves the backbone under every protocol, and certifies
\[
1.8309875\ \ (h=0.002),\qquad 1.8518415\ \ (h=0.001),
\]
a new record at both grids ($+0.0031$ and $+0.0027$ over the backbone).

\subsection{Grid fragility: what the coarse objective really bought}
\label{sec:fragility}

The identical-protocol comparison of run~I's winner against its seed
exposes a warning sign.  At the screening grid the gain is large; at the
$2\times$ finer reference grid most of it evaporates:
\[
\text{gain at }h=10^{-2}:\ +0.0138,\qquad
\text{gain at }h=5\times10^{-3}:\ +0.0033 .
\]
Equivalently, the winner's own \emph{grid sensitivity} --- its value drop
from $h=10^{-2}$ to $5\times10^{-3}$ under the reference protocol --- is
$0.01125$, against the backbone's $0.00074$: the search, ranking candidates
at $h=10^{-2}$, had partly optimized sub-voxel structure that only the
coarse grid can see.  Since the certifier works at grids $2.5$--$5\times$
finer still, such structure is wasted effort at best.  This measured
asymmetry motivates the next run.

\subsection{Run II: mixed-fidelity CMA-ES}\label{sec:mixed}

Run~II re-aligns the selection pressure with what transfers.  Warm-started
from run~I's winner (population $32$, $\sigma_0=0.08$), each generation is
scored as before at the screening fidelity, but the top $8$ candidates are
then \emph{re-scored under the reference protocol} --- the full $512$-pose
path at the $2\times$ finer grid $h=5\times10^{-3}$ --- and it is those
fine-grid ranks that order the elites in the recombination update (the
non-elite ranks and the value scale are left untouched).  A fine evaluation
costs $\approx60\times$ a screening one, so fine-scoring only the top $8$
of $32$ keeps a generation $\approx4\times$ cheaper than scoring the whole
population fine, while still placing $\approx83\%$ of the recombination
weight under fine-grid ranks; the run completed $232$ generations
($7{,}424$ screening $+\,1{,}856$ fine evaluations) in $18.4$ hours.

Against run~I's winner re-measured on the same evaluator instances
(Remark~\ref{rem:fp32}), run~II's winner gains $+0.00456$ at the reference
protocol and $+0.00417$ at the guard, and is ahead on \emph{every}
protocol at both grids; the most conservative reading (dense-pose
evaluation at the fine grid, which strips pose-sampling slack) still gains
$+0.00346$.  The quantity the design targeted responded directly: the
winner's grid sensitivity dropped from run~I's $0.01125$ to $0.00686$, a
$39\%$ reduction (backbone: $0.00074$) --- reduced, not eliminated.
Certified:
\[
1.8339829\ \ (h=0.002),
\]
$+0.0030$ over run~I; no grid-$0.001$ certificate was run for run~II,
because by certification day run~III had overtaken it
(Section~\ref{sec:predictor}).

\section{Certifier-aligned gradient polish}\label{sec:polish}

The third stage abandons population search for local ascent: differentiate
a carve volume with respect to the control poses and climb.  What makes
this section worth writing up is not the ascent --- it is the three ways
the ascent \emph{defeated its own objective} before it produced a genuine
gain, each caught by an independent re-derivation, each closed by a
safeguard that is now a permanent part of the method.  We state the
machinery, then the failure modes, then the run of record.

\subsection{A differentiable carve}\label{sec:diffcarve}

Fix a control path of $512$ poses.  The \emph{frozen structure} is the
dense pose set obtained from it by the pipeline's own smoothing rule at the
default tolerances, expressed as a fixed linear interpolation operator $W$
applied to the control poses; first and last controls are pinned to the
fixed endpoint poses.  For each voxel centre $i$ and dense pose $j$ let
$m_{ij}$ be the signed corridor margin (positive inside); the soft
objective is
\[
V_{\alpha,\beta}\;=\;h^3\sum_i \sigma\bigl(\beta\, m_i\bigr),
\qquad
m_i\;=\;-\tfrac1\alpha\log\sum_j e^{-\alpha m_{ij}}
\quad(\text{a soft minimum over poses}),
\]
with $\sigma$ the logistic function; its gradient is computed by a
two-pass streaming adjoint (exact up to a provably inactive margin band),
and the \emph{hard} volume $V_{\mathrm{hard}}$ --- the actual pose-sampled
carve of the dense path --- is tracked alongside.  At matched settings the
implementation's hard volume agrees with the reference evaluator's
bit-exactly at $h=10^{-2}$ and to $2$ voxels in $15\times10^6$ at
$h=5\times10^{-3}$.

The sharpness pair $(\alpha,\beta)$ is not a tuning detail; it decides
whether the proxy points uphill at all.  At $(600,300)$ the sigmoid
transition band $1/\beta$ is a third of a voxel and, for a body defined by
a minimum over $512$ poses, $43\%$ of its voxels sit in the partial-count
band: the proxy then rewards \emph{clearance} rather than volume, and
along an actual optimization trajectory it is \emph{anti-correlated} with
the true volume (Spearman $-0.94$) --- the optimizer faithfully climbs the
proxy while the body shrinks.  At $(5000,3000)$ the band is $0.067$ voxel
widths, $6.6\%$ of voxels sit in it, and the same diagnostic gives
$+0.93$/$+0.97$.  All polish results below use fixed $(5000,3000)$; no
annealing.

\subsection{Three ways to game a carve objective}\label{sec:gaming}

\paragraph{Failure mode 1: frozen-structure inflation.}
The frozen operator $W$ is derived from the \emph{seed}; nothing forces it
to remain adequate for the \emph{current} path.  In an early polish run
seeded from the backbone, the in-run frozen objective climbed to
$1.890235$, while the official re-evaluation of the same final path ---
re-deriving the dense structure from its control poses, as the pipeline
would --- gave $1.841138$: the frozen objective overstated the truth by
$+0.049$, and the ``gain'' was an artifact of the optimizer stretching the
control spacing $\approx30\%$ so that the frozen interpolation
under-sampled the motion (fewer effective constraints, more surviving
voxels).  \emph{Safeguard: periodic re-freezing} --- re-derive the dense
structure from the current control path on a fixed cadence and continue.
With re-freezing every $25$ iterations (or $45$ minutes), the
frozen-versus-re-derived gap in the final run measures $-2.5\times10^{-5}$
on average (worst $-9.3\times10^{-5}$): the exploit is closed by three
orders of magnitude.

\paragraph{Failure mode 2: control-spacing stretch.}
The lever behind mode 1 deserves its own guard, because penalizing it
correctly is a trap.  The natural penalty --- the re-smoothed pose count of
the current control path --- is a sum of \emph{ceilings}: piecewise
constant, gradient zero almost everywhere, it steers nothing.  The obvious
continuous surrogate (drop the ceiling) fails on precisely the paths that
matter: a converged seed sits with segments at $92$--$99.7\%$ of the
smoothing tolerance, so a uniform $1.10\times$ stretch pushes many
segments just past the threshold --- the true re-smoothed count jumps by
$1.467\times$ while the linear surrogate reads $1.042\times$, below a
$1.05$ alarm line.  The penalty would sleep through the very violation it
exists to prevent.  \emph{Safeguard}: a smoothed-ceiling surrogate (a sum
of steep sigmoids at the jump locations) plus a per-segment barrier at the
tolerance, with the reported criterion always the \emph{exact integer}
count.  In the final run the count grew $1.020\times$ (gamed run:
$1.305\times$; alarm line $1.05$).

\paragraph{Failure mode 3: coarse-grid exploitation.}
With modes 1--2 closed, a validation run optimizing at $h=10^{-2}$
produced an apparently healthy $+0.004021$ on its own objective --- of
which only $+0.000535$ survived a single re-scoring at $h=5\times10^{-3}$:
$87\%$ of the gain was discretization artifact.  A run that optimizes
\emph{and} selects at one grid cannot detect its own grid exploitation,
by construction; and selecting finer while steering coarse does not fix
it --- if most of the coarse gradient field is artifact, fine-grid
selection merely cherry-picks accidents (selection can harvest, but cannot
steer).  \emph{Safeguards}: optimize directly at the reference grid
$h=5\times10^{-3}$, and audit every claimed gain by a one-off re-scoring
at the next grid, $h=2.5\times10^{-3}$ (the \emph{retention check}), before
believing it.
\medskip

The three modes share one shape --- the optimizer exploits whatever the
objective cannot see (a stale interpolation structure, a discrete count, a
finite grid) --- and one cure: re-derive the gamed quantity on a more
trusted reference, on a schedule the optimizer cannot anticipate.  We
expect all three to recur in any gradient- or population-based attack on a
fixed discretized feasibility evaluator.

\subsection{Run III: the polish of record}\label{sec:polishrun}

Seeded from run~I's path (control spacing near tolerance, the hard case
for mode 2), run~III ascends the differentiable carve at
$h=5\times10^{-3}$, fixed $(\alpha,\beta)=(5000,3000)$, with re-freezing
every $25$ iterations or $45$ minutes and the spacing penalty active:
$118$ iterations in $14.8$ hours (CPU), proxy--truth Spearman $+0.97$
across all $118$ snapshots, still climbing at the wall.

\begin{table}[ht]
\centering
\caption{The coarse-grid-exploitation exhibit: run III's seed and final
path under the reference protocol at three grids (pose-sampled carve;
one evaluator instance per grid).  The polish \emph{loses} $0.0098$ at the
coarse grid while gaining $0.0085$ at the grid it optimized and retaining
$0.0055$ ($64.9\%$) at the finer audit grid --- the inverse of the run-I
signature of \S\ref{sec:fragility} (large coarse gain, small fine gain):
coarse-visible artifact was traded away for structure that survives
refinement.}
\label{tab:signreversal}
\small
\begin{tabular}{lccc}
\toprule
 & $h=10^{-2}$ & $h=5\times10^{-3}$ (optimized) & $h=2.5\times10^{-3}$ (audit) \\
\midrule
seed (run I) & 1.892598 & 1.881348 & 1.882372 \\
final (run III) & 1.882821 & 1.889821 & 1.887872 \\
gain & $\mathbf{-0.009777}$ & $\mathbf{+0.008472}$ & $\mathbf{+0.005500}$ \\
\bottomrule
\end{tabular}
\end{table}

Table~\ref{tab:signreversal} is the run's honest summary, and doubles as
the exhibit for failure mode 3: the sign of the gain \emph{reverses} with
the grid.  The coarse grid is not a consistent estimator across these two
bodies --- it overstates the seed by $+0.0113$ and understates the final
path by $-0.0070$ relative to the finer grids, which agree with each other
to $\approx0.002$ --- so a coarse-grid evaluation would have rejected the
best path found in this project.  The retention check passes ($64.9\%$ of
the gain survives the $2\times$ audit refinement, versus $13\%$ in the
mode-3 validation run), the guard value moves in step ($+0.0082$), and the
safeguard criteria hold as quoted in \S\ref{sec:gaming}.  On the same
audit grid run~III also leads run~II head-to-head, $+0.00234$ at
$h=2.5\times10^{-3}$ (both paths re-scored on one shared instance).

Certified:
\[
1.8355162\ \ (h=0.002),\qquad 1.8558404\ \ (h=0.001),
\]
the current records --- $+0.0015$ and $+0.0040$ over the best previous
rungs of Table~\ref{tab:ladder} (run~II at grid $0.002$, run~I at grid
$0.001$), and $+0.0176$ over the control at the shared grid
$0.002$.  The record path is thus
\emph{search plus certifier-aligned polish}: a CMA-ES refinement of a
hand-designed mechanism, followed by $118$ gradient steps against (a
differentiable surrogate of) the certified protocol itself, with every
anti-gaming safeguard active.

\section{Predicting certificates before spending them}
\label{sec:predictor}

A grid-$0.002$ certification run costs $\approx27$ minutes on an
A100-class GPU and a grid-$0.001$ run $\approx3.4$ hours, so each
candidate promotion decision wants a cheap, honest forecast.  The tool is
the \emph{staircase predictor}.

\paragraph{Construction.}
For a path $P$ and a \emph{basis grid} $h$, let $S_P(h)$ be the
pose-sampled carve, at grid $h$, of the pipeline's own motion semantics:
the path smoothed at the certification tolerances $(0.005,\,0.05^\circ)$
and decomposed into the axis staircase ($\approx29{,}000$ substep poses),
evaluated locally on CPU in $\approx1$--$2$ hours.  $S_P(h)$ sees the
certified motion but not the swept-arc erosion or the counting window, so
it overestimates; the missing deduction is measured on an \emph{anchor}
path $A$ with a known certificate $\Cg{g}(A)$ and transferred:
\[
\mathrm{pred}_g^{(h)}(P)\;=\;S_P(h)\;-\;\bigl[S_A(h)-\Cg{g}(A)\bigr].
\]
In its first, single-basis form ($h=10^{-2}$, anchored on the published
$1.81$ of the control path) the predictor forecast the backbone's first
certificate as $1.825\pm0.010$; the pipeline returned $1.8279$.

\paragraph{Two bases, and the anchor-selection rule.}
Direct measurement then showed that $S_P$ is itself grid-sensitive, and
\emph{path-dependently} so: define
$D_h(P):=S_P(10^{-2})-S_P(5\times10^{-3})$.  Measured values:
\begin{align*}
D_h(\text{backbone})&=+0.0009, & D_h(\text{run I})&=+0.0080,\\
D_h(\text{run II})&=+0.0074, & D_h(\text{run III})&=-0.0068 .
\end{align*}
A $9\times$ spread on the positive side, and a sign change: no single
basis can transfer cleanly between arbitrary paths (on run~I the two
bases erred $+0.0037$ and $-0.0034$ against the known truth --- while
their \emph{average} landed $1.2\times10^{-4}$ from it).  The working
form of the predictor is therefore: compute $S_P$ at \emph{both} bases,
choose the anchor whose own $D_h$ \emph{matches the target's}, and quote
the pair of implied certificates as an interval.  The rule earns its
keep: for run~II (with run~I, $D_h$-matched, as anchor) the two bases
bracket to within $0.0006$, where the mismatched anchor spreads
$0.0065$; and for run~III --- whose $D_h$ is \emph{negative}, matching no
available anchor --- the predictor correctly declares itself
\emph{indeterminate}, quoting only wide straddling intervals rather than
a false mean.  In that situation the fallback that proved reliable is
direct fine-grid measurement: the $h=2.5\times10^{-3}$ ranking
(run III $>$ run II $>$ run I) called the certified order correctly, and
one $27$-minute grid-$0.002$ run arbitrated before the $3.4$-hour
grid-$0.001$ commitment.

\begin{table}[ht]
\centering
\caption{The predictor's complete scorecard against subsequently
certified truth.  Quoted forms are as issued at the time: bands are
central $\pm$ half-width, brackets are two-basis intervals
($D_h$-matched anchor).  Run III's entries were \emph{declared}
indeterminate before certification --- the spread is the honest result;
both intervals contained the truth (in their upper halves).}
\label{tab:scorecard}
\small
\setlength{\tabcolsep}{4pt}
\begin{tabular}{llllr}
\toprule
target & grid & prediction (as issued) & certified & error \\
\midrule
backbone & 0.002 & $1.825\pm0.010$ & 1.8279082 & $+0.0029$ \\
run I & 0.002 & $1.832\pm0.006$ & 1.8309875 & $-0.0010$ \\
run I & 0.001 & $1.8523\pm0.004$ & 1.8518415 & $-0.0005$ \\
run II & 0.002 & $[1.834013,\,1.834628]$ & 1.8339829 & $-0.0003$ \\
run III & 0.002 & indeterminate; $[1.8241,\,1.8389]$ & 1.8355162 & contained \\
run III & 0.001 & indeterminate; $[1.8450,\,1.8598]$ & 1.8558404 & contained \\
\bottomrule
\end{tabular}
\end{table}

\paragraph{Scorecard and lessons.}
Table~\ref{tab:scorecard} is the complete record: no prediction is
omitted.  Central estimates were accurate to $0.001$ or better once the
two-basis form existed (rows 2--4), and the run~I grid-$0.001$ prediction
was corroborated by a structurally independent route (the certified
grid-$0.002$ value plus the backbone's measured grid increment
$+0.02125$), the two routes agreeing to $1.2\times10^{-4}$.  The one
blemish is instructive: run~II's certified value fell $3.0\times10^{-5}$
\emph{below} the quoted two-basis interval --- the central estimate was
off by only $0.0003$, but the interval, taken literally, was over-tight.
The lesson adopted: a transfer-based interval is never quoted with
half-width below $0.0005$, however tightly the bases agree.  The
predictor's role is economic, not evidentiary --- every headline number in
this note is a pipeline certificate, and the forecasts merely decided
which certification runs to buy.

\section{Discussion}\label{sec:discussion}

\paragraph{Where the remaining gap lives.}
The record body's pose-sampled value is $\approx1.888$ at
$h=2.5\times10^{-3}$ (Table~\ref{tab:signreversal}), against a certified
$1.8558404$ at grid $0.001$: about $0.03$ of the motion's value is
currently lost to certification conservatism (staircase erosion at the
$0.005/0.05^\circ$ substep scale, plus the finite grid with its
counting window), not to the construction.  The certified values are
strongly monotone in grid refinement --- each path gains between
$+0.0203$ and $+0.0213$ from grid $0.002$ to $0.001$ --- and the sequence
points toward the pose-sampled value.  A grid-$0.0005$ run (roughly
$8\times$ the cost, $\approx30$ GPU-hours) is the obvious next
certificate; on the measured trend it would add another
$+0.01$--$+0.02$.

\paragraph{Upgrading rigor.}
Three upgrades would strengthen every number in Table~\ref{tab:ladder} at
once, the original $1.81$ included: proving the points-to-cubes rule for
this corridor, or replacing it by a quantified $\varepsilon$-correction
(Remark~\ref{rem:caveat}); exporting the surviving cube set at the
certified grids, making the connectivity evidence of
\S\ref{sec:soundness} a direct check; and re-running the final counting
in verified arithmetic.  None seems difficult in principle, and none
touches the comparative content of this note.

\paragraph{The bracket.}
On the lower-bound side the certified ladder climbed $1.8179\to1.8355$
(grid $0.002$) and $1.8492\to1.8558$ (grid $0.001$) over the course of
this work, and the mechanisms that produced it --- twist confined to the
second corner, selection pressure aligned with the finer grid, polish
against the certified protocol itself --- have measurable headroom left.
Together with the companion note \cite{OurUB}, which proves
$\Vstar(\K_M)\le3.4083$ for every $M>2.2$ by a certified branch-and-bound
over event-forced tube families, the problem of Georgiev,
G\'omez-Serrano, Tao and Wagner \cite{GGTW25} now carries a two-sided,
machine-audited bracket:
\[
1.8558404\;\le\;\Vstar(\K_4)\;\le\;3.4083 .
\]

\appendix

\section{Reproducibility artifacts}\label{app:repro}

All certificates were produced by the pipeline of \cite{GGTW25} at commit
\begin{center}\small
\texttt{8f447457957deac61e28bf1676746f0753b3b2f8}
\end{center}
(the source notebook is additionally SHA-256--checked in every session),
JAX $0.7.2$, one CUDA GPU (A100-class), maximum batch $2\times10^8$.  The
public archive of record is the repository issue and reproduction gist
cited as \cite{OurLB}; it contains twelve files.  The neutral path names
of this note map to the public artifacts as follows.

\begin{itemize}
\item \emph{Paths} (each a JSON pose path in the pipeline's conventions,
with fixed endpoints; the four non-control paths embed SHA-256 checks in
the notebooks): \path{w4_staggered_path.json} $=$ the backbone
($251$ poses; contains the zero-rotation dither disclosure of
Remark~\ref{rem:hazard} for its three pure-translation steps);
\path{w5_night1_path.json} $=$ run I;
\path{w5_night2_path.json} $=$ run II;
\path{w6_night3_path.json} $=$ run III ($512$ poses each).  The
control path is \path{best_path_found} from the repository itself.
The SHA-256 digests of the two newest paths:
\begin{center}\small
run II: \texttt{dcbf2d08cd78b5b25605ecb018b5417fbaf6c1ad49e14775ff566d36a4ccd71d}\\
run III: \texttt{affa1b57c7ca7912e78c02a143dc2f416a16815db453d201c5bc2daa23bd8b04}
\end{center}
\item \emph{Certification notebooks} (self-contained; each clones the
repository at the pinned commit, extracts the certifier cells by content
markers with symbol assertions, runs the anchor of
\S\ref{sec:provenance} with a hard stop on mismatch, then the
certification runs): \path{w3_certify_colab_v2.ipynb} and
\path{w3_certify_colab_v3.ipynb} (the backbone sessions: control
run, grid-$0.002$ and grid-$0.001$ certificates);
\path{w5_certify_colab_v4.ipynb} (run I, both grids); and
\path{w5_certify_colab_v5.ipynb} (runs II and III head-to-head at
grid $0.002$, $\approx27$ minutes each, then an auto-selected
grid-$0.001$ run on the winner, $\approx3.4$ hours).
\item \emph{Results of record} --- per-run configuration, smoothed pose
counts, timings, device, the bit-level anchor values, and the per-path
zero-rotation hazard scans (including the recorded dither entries for
the backbone):
\path{w4_certified_results.json},
\path{w5_certified_results.json}, and
\path{w6_night3_certified_results.json}.
\item \emph{Session structure}.  Four GPU sessions: (i) control
$1.8179428$ and backbone $1.8279082$ at grid $0.002$; (ii) backbone
$1.8491627$ at grid $0.001$; (iii) run I, $1.8309875$ / $1.8518415$;
(iv) runs II and III at grid $0.002$ ($1.8339829$ / $1.8355162$) and
run III at grid $0.001$ ($1.8558404$).  Every session re-ran the
grid-$0.005$ backbone anchor and reproduced $1.7641937499999625$
bit-identically (\S\ref{sec:provenance}).
\end{itemize}

Every constant stated in this note was re-verified against the
run-of-record JSON files by an automated witness script maintained
alongside the note's source; the connectivity evidence of
\S\ref{sec:soundness} (ten carves, all single-component, eroded and not)
is likewise archived with its generating script.

\section*{Tool use}

All certificates reported here were produced by the unmodified pipeline of
\cite{GGTW25} (JAX $0.7.2$) on an A100-class GPU, in the sessions described in
Appendix~\ref{app:repro}, run by the author; the backbone construction, the
search and polish stages, the staircase predictor and the verification scripts
are Python (NumPy/PyTorch) and are archived as described there.  This note was
typeset in \LaTeX.

The research reported here is original work, conceived and driven by the
author.  Claude (Anthropic) was used as a writing and coding assistant: it
helped polish the exposition of this note, and assisted with parts of the
code.  All final results reported here were verified by the author, who is
responsible for the content of this note.

\end{document}